\begin{document}
\setlength{\oddsidemargin}{0cm} \setlength{\evensidemargin}{0cm}
\baselineskip=20pt

\theoremstyle{plain} \makeatletter
\newtheorem{theorem}{Theorem}[section]
\newtheorem{proposition}[theorem]{Proposition}
\newtheorem{lemma}[theorem]{Lemma}
\newtheorem{coro}[theorem]{Corollary}

\theoremstyle{definition}
\newtheorem{defi}[theorem]{Definition}
\newtheorem{notation}[theorem]{Notation}
\newtheorem{exam}[theorem]{Example}
\newtheorem{conj}[theorem]{Conjecture}
\newtheorem{prob}[theorem]{Problem}
\newtheorem{remark}[theorem]{Remark}
\newtheorem{claim}{Claim}

\numberwithin{equation}{section}

\title{Fermionic Novikov algebras admitting invariant non-degenerate symmetric bilinear forms are Novikov algebras}

\author{Zhiqi Chen}
\address{School of Mathematical Sciences and LPMC, Nankai University,
Tianjin 300071, P.R. China}\email{chenzhiqi@nankai.edu.cn}

\author{Ming Ding}
\address{Corresponding author. School of Mathematical Sciences and LPMC, Nankai University,
Tianjin 300071, P.R. China}\email{dingming@nankai.edu.cn}

\def\shorttitle{Fermionic Novikov algebras}

\subjclass[2010]{17B60, 17A30, 17D25}

\keywords{Novikov algebra, Fermionic Novikov algebra, invariant bilinear form}

\begin{abstract}
This paper is to prove that a fermionic Novikov algebra equipped with an invariant non-degenerate symmetric bilinear form is a Novikov algebra.
\end{abstract}
\maketitle


\renewcommand{\theequation}{\arabic{section}.\arabic{equation}}
\section{Introduction}
Gel$^\prime$fand and Dikii gave a bosonic formal variational
calculus in \cite{GD1,GD2} and Xu  gave a fermionic formal
variational calculus in \cite{Xu1}. Combining the
bosonic theory of Gel$^\prime$fand-Dikii and the fermionic theory, Xu gave in \cite{Xu2} a formal variational calculus of
super-variables. Fermionic Novikov algebras are related to the Hamiltonian
super-operator in terms of this theory. A fermionic Novikov algebra is a finite-dimensional vector space $A$ over a field $\mathbb F$ with a bilinear
product $(x,y)\mapsto xy$ satisfying
\begin{equation}\label{1}
(xy)z-x(yz)=(yx)z-y(xz),
\end{equation}
\begin{equation}\label{rsy}
(xy)z=-(xz)y
\end{equation}
for any $x,y,z\in A$. It corresponds to the following Hamiltonian
operator $H$ of type 0 (\cite{Xu2}):
\begin{equation}
H^0_{\alpha,\beta}=\sum\limits_{\gamma\in
I}(a^\gamma_{\alpha,\beta}\Phi_\gamma(2)+b^\gamma_{\alpha,\beta}\Phi_\gamma
D),\quad
a^\gamma_{\alpha,\beta},b^\gamma_{\alpha,\beta}\in\mathbb{R}.
\end{equation}

Fermionic Novikov algebras are a class of left-symmetric
algebras which are defined by the identity~(\ref{1}). Left-symmetric algebras are a class of non-associative algebras arising from the
study of affine manifolds, affine structures and convex homogeneous
cones (\cite{Bu1,Vi1}). Novikov algebras are another class of left-symmetric algebras $A$ satisfying
\begin{equation}\label{Novi}
  (xy)z=(xz)y, \quad\forall x,y,z\in A
\end{equation}
Novikov algebras were introduced in connection with the Poisson
brackets of hydrodynamic type (\cite{B-N1,DN1,DN2}) and Hamiltonian
operators in the formal variational calculus
(\cite{GD1,GD2,GD3,Xu1,Xu3}).

The commutator of a left-symmetric algebra $A$
\begin{equation}\label{com}
[x,y]=xy-yx
\end{equation}
defines a Lie algebra, which is called the underlying Lie algebra of $A$. A bilinear form $\langle\cdot,\cdot\rangle$ on a left-symmetric algebra $A$ is invariant if \begin{equation}
\langle R_{x}y, z\rangle=\langle
y, R_{x}z\rangle \end{equation}
for any $x, y, z\in A$.

Zelmanov (\cite{Ze1}) classifies real Novikov algebras with invariant positive definite symmetric bilinear forms. In \cite{MG}, Guediri gives the classification for the Lorentzian case. This paper is to study real fermionic Novikov algebras admitting invariant non-degenerate symmetric bilinear forms. The main result is the following theorem.
\begin{theorem}\label{maintheorem}
Any finite dimensional real fermionic Novikov algebra admitting an invariant non-degenerate symmetric bilinear form is a Novikov algebra.
\end{theorem}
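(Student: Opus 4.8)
The plan is to translate the hypotheses into operator identities and reduce the theorem to an isotropy statement. Writing $R_xy=yx$ and $L_xy=xy$, invariance says that each $R_x$ is self-adjoint for $\langle\cdot,\cdot\rangle$, while \eqref{rsy} becomes $R_yR_z+R_zR_y=0$; in particular $R_x^2=0$, so each $R_x$ is a self-adjoint square-zero operator. Identity \eqref{1} is the statement $[L_x,L_y]=L_{[x,y]}$, and combining it with \eqref{rsy} yields the auxiliary identity $[x,yz]=[y,xz]$ for all $x,y,z$, which is the only channel through which the left-symmetric axiom will enter. Since $A$ is Novikov precisely when it also satisfies \eqref{Novi}, and \eqref{Novi} together with \eqref{rsy} forces $(xy)z=-(xy)z$, the theorem is equivalent to the vanishing of all triple products $(xy)z$.

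First I would record that $(xy)z=R_zR_yx$, so the theorem amounts to $R_zR_y=0$ for all $y,z$. Because $R_c$ is self-adjoint, $\ker R_c=(\operatorname{Im}R_c)^\perp$, whence the common kernel $\bigcap_c\ker R_c$ equals $\bigl(\sum_c\operatorname{Im}R_c\bigr)^\perp=(A^2)^\perp$. Thus $R_zR_y=0$ for all $y,z$ is equivalent to $A^2\subseteq(A^2)^\perp$, i.e. to the subspace $A^2=\sum_c Ac$ being totally isotropic. Each summand $Ac=\operatorname{Im}R_c$ is already totally isotropic (as $R_c^2=0$ and $R_c$ is self-adjoint), so the entire content is the cross-orthogonality $\operatorname{Im}R_b\perp\operatorname{Im}R_d$, i.e. $\langle ab,cd\rangle=0$ for all $a,b,c,d$.

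In the definite case this is immediate: a self-adjoint nilpotent operator for a definite form is zero, so every $R_c=0$ and the product is trivial. The genuine difficulty is the indefinite case, and here invariance and \eqref{rsy} alone do not suffice: on a neutral space one can exhibit an anticommuting pair of self-adjoint square-zero operators with nonzero product (for instance on $M\oplus M^{*}$, with $M$ the regular module of the three-dimensional associative algebra generated by two such operators, equipped with the hyperbolic form). Hence left-symmetry must be used to exclude such configurations. Setting $F(a,b,c,d)=\langle ab,cd\rangle$, I would first extract from self-adjointness and \eqref{rsy} the symmetries $F(a,b,c,d)=F(c,d,a,b)=-F(a,d,c,b)=-F(c,b,a,d)$, and then feed in the identity $[x,yz]=[y,xz]$; the obstacle is that these relations are formally consistent with a nonzero ``symplectic-type'' solution $F(a,b,c,d)=\omega(a,c)\omega(b,d)$, so they must be supplemented by the real structure.

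The decisive step, which I expect to be the hard part, is therefore an argument over $\mathbb{R}$ that combines the self-adjoint square-zero operators $R_c$ with left-symmetry. I would proceed by induction on $\dim A$: if $A^2$ is not totally isotropic, choose $c$ with $R_c\neq0$ and use the real normal form of the self-adjoint nilpotent $R_c$ to split off a hyperbolic plane $H$ on which $R_c$ acts nontrivially. Using that $A^2$ is a two-sided ideal with $(A^2)^\perp$ its orthogonal complement, the identity $[x,yz]=[y,xz]$ should control how the product interacts with the orthogonal decomposition $A=H\oplus H^\perp$, so that one either passes to a lower-dimensional fermionic Novikov algebra carrying an induced invariant non-degenerate form or produces an anisotropic vector forced to be null. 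Closing this induction, that is, showing that left-symmetry genuinely forces $\operatorname{Im}R_b\perp\operatorname{Im}R_d$ once the form is real and non-degenerate, is the crux of the proof.
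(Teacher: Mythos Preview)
Your reductions are correct and useful: the theorem is equivalent to $R_yR_z=0$ for all $y,z$, hence to $A^2$ being totally isotropic, and you rightly isolate the cross-orthogonality $\langle ab,cd\rangle=0$ as the non-formal content. But the proposal stops precisely where the work begins. The inductive scheme you sketch---split off one hyperbolic plane $H$ from the normal form of a single $R_c$ and pass to $H^\perp$---is never executed, and as stated it is not clear it can be: $H^\perp$ need not be a subalgebra, the fact that $A^2$ is an ideal does not make the decomposition $A=H\oplus H^\perp$ multiplicative, and you give no concrete mechanism by which $[x,yz]=[y,xz]$ constrains the blocks. You yourself flag ``closing this induction'' as ``the crux of the proof,'' and that crux is simply absent.

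The paper does not induct on $\dim A$ at all. It fixes $x_0$ with $k:=\operatorname{rank}R_{x_0}$ \emph{maximal}, puts $R_{x_0}$ into the self-adjoint nilpotent normal form (a direct sum of $k$ Jordan $2$-blocks, with the metric on that $2k$-dimensional piece hyperbolic), and then determines the matrix of every $R_x$ in this fixed basis. The engine is rank maximality: since $\operatorname{rank}(R_x+lR_{x_0})\le k$ for every real $l$, each $(k{+}1)\times(k{+}1)$ minor of $R_x+lR_{x_0}$ is a polynomial in $l$ that must vanish identically, and this kills all blocks of $R_x$ outside the $2k\times 2k$ window. Inside that window, anticommutation with $R_{x_0}$ forces the $2\times2$ subblocks of $R_x$ into lower-triangular shape $\bigl(\begin{smallmatrix}b_{ij}&0\\ d_{ij}&-b_{ij}\end{smallmatrix}\bigr)$, self-adjointness ties the $(i,j)$ and $(j,i)$ subblocks, and then $R_x^2=0$ read on the diagonal gives $\sum_j b_{ij}^2=0$, which over $\mathbb{R}$ forces all $b_{ij}=0$. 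Every $R_x$ is then strictly lower-triangular in the same pattern as $R_{x_0}$, so $R_xR_y=0$. The missing idea in your plan is this maximal-rank element together with the one-parameter family $R_{x+lx_0}$; it replaces your unspecified induction entirely.
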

In order to prove Theorem~\ref{maintheorem}, we describe the structure of these fermionic Novikov algebras. But we only give part of the classification since the complete classification is very complicated.

\section{The proof of Theorem~\ref{maintheorem}}
Let $A$ be a fermionic Novikov algebra and let $L_x$ and $R_{x}$ denote the left and right multiplication operator by the element $x\in A$ respectively, i.e., $$L_{x}(y)=xy,\quad R_{x}(y)=yx$$ for any $y\in A$. By the equation~(\ref{rsy}), we have $$R_{x}R_{y}=-R_{y}R_{x},\quad \forall x,y\in A.$$ In particular, $R_{x}^2=0$ for any $x\in A$.

\begin{defi}\label{defi}
A non-degenerate bilinear form $\langle\cdot,\cdot\rangle$ on $V$ is of type $(n-p,p)$ if there is a basis $\{e_1,\ldots,e_n\}$ of $V$ such that
$\langle e_i,e_i\rangle=-1$ for $1\leq i\leq p$, $\langle e_i,e_i\rangle=1$ for $p+1\leq i\leq n$, and $\langle e_i,e_j\rangle=0$ for otherwise. The bilinear form is positive definite if $p=0$; Lorentzian if $p=1$.
\end{defi}

A linear operator $\sigma$ of $(V,\langle\cdot,\cdot\rangle)$ is self-adjoint if $$\langle \sigma(x), y\rangle=\langle
x, \sigma(y)\rangle,\quad \forall x,y\in V.$$

\begin{lemma}[\cite{O}, pp. 260-261]\label{mainlemma}
A linear operator $\sigma$ on $V=\mathbb{R}^{n}$ is self-adjiont if and only if $V$ can be expressed as a direct sum of $V_{k}$ that are mutually orthogonal
(hence non-degenerate), $\sigma-$invariant, and each $\sigma\mid_{V_k}$ has a $r\times r$ matrix form either $$\left(  \begin{array}{cccc} \lambda & 0 & \cdots & 0 \\
 1 & \lambda & \cdots & \vdots  \\
  \vdots & \ddots &  \lambda & 0  \\
  0 & \cdots & 1 & \lambda
   \end{array} \right)$$
relative to a basis $\alpha_1,\ldots,\alpha_r(r\geq 1)$   with all scalar products  zero except $\langle \alpha_i,\alpha_j\rangle=\pm 1$ if $i+j=r+1$,
or
   $$\left(  \begin{array}{cccccc} \left(
                                            \begin{array}{cc}
                                              a & b \\
                                              -b & a \\
                                            \end{array}
                                          \right)
    &  &  &  &  &  \\
  \left(
                                            \begin{array}{cc}
                                              1 & 0 \\
                                              0 & 1 \\
                                            \end{array}
                                          \right)
    & \left(
                                            \begin{array}{cc}
                                              a & b \\
                                              -b & a \\
                                            \end{array}
                                          \right) &  &  & 0 &   \\

    & \left(
                                            \begin{array}{cc}
                                              1 & 0 \\
                                              0 & 1 \\
                                            \end{array}
                                          \right) & \left(
                                            \begin{array}{cc}
                                              a & b \\
                                              -b & a \\
                                            \end{array}
                                          \right) &  &  &  \\

    0 &  &\ddots  &  & \ddots &  \\

     &  &  & & \left(
                 \begin{array}{cc}
                   1 & 0 \\
                   0 & 1 \\
                 \end{array}
               \right)
      & \left(
                 \begin{array}{cc}
                   a & b \\
                   -b & a \\
                 \end{array}
               \right)
   \end{array} \right)$$
where $b\neq 0$ relative to a basis $\beta_1,\alpha_1,\ldots,\beta_m,\alpha_m$   with all scalar products  zero
except $\langle \beta_i,\beta_j\rangle=1=-\langle \alpha_i,\alpha_j\rangle$ if $i+j=m+1$.
\end{lemma}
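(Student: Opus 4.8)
The statement is an equivalence, and the plan is to dispatch the easy direction directly and reduce the hard direction to a three-stage structural analysis. For the easy direction, suppose $V$ is an orthogonal direct sum of $\sigma$-invariant subspaces $V_k$ on each of which $\sigma$ and the form have the displayed shape. On a block of the first type one has $\sigma\alpha_i=\lambda\alpha_i+\alpha_{i+1}$ (with $\alpha_{r+1}=0$), so $\langle\sigma\alpha_i,\alpha_j\rangle$ and $\langle\alpha_i,\sigma\alpha_j\rangle$ differ only by $\langle\alpha_{i+1},\alpha_j\rangle$ versus $\langle\alpha_i,\alpha_{j+1}\rangle$, and both equal the same $\pm1$ exactly when $i+j=r$; the second type is checked the same way. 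Self-adjointness then propagates to all of $V$ by orthogonality, so only the forward implication needs work.

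For the forward direction I would first split $V$ by the real primary decomposition of $\sigma$. Factor its minimal polynomial into powers of distinct monic real irreducibles, each either linear $t-\lambda$ or quadratic $(t-a)^2+b^2$ with $b\neq0$, and set $V_k=\ker p_k(\sigma)^{e_k}$, so that $V=\bigoplus_k V_k$ with each $V_k$ being $\sigma$-invariant. Since $\sigma$ is self-adjoint, so is $f(\sigma)$ for every real polynomial $f$; writing a B\'ezout identity $1=u\,p_k^{e_k}+w\,p_l^{e_l}$ (the factors being coprime for $k\neq l$) and applying it inside $\langle x,y\rangle$ for $x\in V_k$, $y\in V_l$ yields $\langle x,y\rangle=0$. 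Thus the $V_k$ are mutually orthogonal, and since the global form is non-degenerate each $V_k$ is non-degenerate. This reduces everything to a single primary component.

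Second, on a linear component set $N=\sigma-\lambda$, a self-adjoint nilpotent of index $r$. The symmetric form $(x,y)\mapsto\langle x,N^{r-1}y\rangle$ is not identically zero (otherwise $N^{r-1}V_k$ would be orthogonal to all of $V_k$, forcing $N^{r-1}=0$), so some $v$ has $c_{r-1}:=\langle v,N^{r-1}v\rangle\neq0$ and hence cyclic order $r$. On $W=\operatorname{span}\{v,Nv,\dots,N^{r-1}v\}$ self-adjointness gives $\langle N^iv,N^jv\rangle=\langle v,N^{i+j}v\rangle=c_{i+j}$, a Hankel Gram matrix that is anti-triangular with nonzero constant anti-diagonal $c_{r-1}$; hence $W$ is non-degenerate, $V_k=W\oplus W^{\perp}$ with $W^{\perp}$ again $\sigma$-invariant and non-degenerate, and I would induct on dimension. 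To normalise one cyclic block, replacing the generator by $\alpha_1=q(N)v$ with $q(0)\neq0$ and $\alpha_i=N^{i-1}\alpha_1$ preserves the Jordan shape of $\sigma$ while changing $c_k$ to $\sum_m (q^2)_m\,c_{k+m}$, where $(q^2)_m$ is the $m$-th coefficient of $q(t)^2$; since $c_{\geq r}=0$, one solves triangularly for the coefficients of $q$ to annihilate all entries with $i+j<r+1$ and then rescales so the surviving entries, at $i+j=r+1$, equal $\pm1$, which is exactly the first normal form.

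The hardest part will be the quadratic components, handled in the same spirit but over $\mathbb{C}$. Here $\mathbb{R}[\sigma]$ acts on $V_k$ through a field isomorphic to $\mathbb{C}$, giving $V_k$ a complex structure under which $\sigma$ is multiplication by $a+bi$; the delicate points are to choose a $\mathbb{C}$-valued companion pairing for which the Hankel normalisation of the previous stage still goes through, to descend the resulting complex cyclic basis to real and imaginary parts so that $\sigma$ acquires the displayed block-bidiagonal rotation form, and to verify that the induced real symmetric form is forced to be split on each block, which produces the neutral pairing $\langle\beta_i,\beta_j\rangle=1=-\langle\alpha_i,\alpha_j\rangle$ for $i+j=m+1$. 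The primary-orthogonality and real-eigenvalue steps are, by contrast, routine B\'ezout and Hankel computations.
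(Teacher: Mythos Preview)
The paper does not prove this lemma at all: it is quoted verbatim as a known result from O'Neill's \emph{Semi-Riemannian Geometry} (pp.~260--261) and then used as a black box in the proof of Theorem~1.1. So there is no ``paper's own proof'' to compare your proposal against.

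That said, your outline is the standard argument one finds in treatments of canonical forms for self-adjoint operators with respect to an indefinite scalar product (and is essentially what O'Neill does). The primary decomposition with B\'ezout orthogonality, the Hankel Gram matrix $\langle N^iv,N^jv\rangle=c_{i+j}$ on a cyclic subspace of a self-adjoint nilpotent, the triangular normalisation via $\alpha_1=q(N)v$, and the complexification for the quadratic components are all correct in spirit. One small point to be careful with in the normalisation: the coefficients $(q^2)_m$ are not independent parameters but quadratic expressions in the $a_i$'s; the triangular system you describe is really solved for $a_1,a_2,\dots$ successively, which works because $(q^2)_m=2a_0a_m+(\text{terms in }a_1,\dots,a_{m-1})$ and $a_0\neq0$. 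The quadratic case, which you flag as the delicate part, does require some care in showing the induced real form is neutral on each block; your sketch is honest about that.
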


If $A$ admits an invariant non-degenerate symmetric bilinear form $\langle\cdot,\cdot\rangle$ of type $(n-p,p)$, then $-\langle\cdot,\cdot\rangle$ is an invariant non-degenerate symmetric bilinear form on $A$ of type $(p,n-p)$. So we can assume $p\leq n-p$.

\begin{lemma}\label{lemma-iso}
Let $A$ be a fermionic Novikov algebra admitting an invariant non-degenerate symmetric bilinear form $\langle\cdot,\cdot\rangle$ of type $(n-p,p)$, then $\dim ImR_{x}\leq p$ for any $x\in A$.
\end{lemma}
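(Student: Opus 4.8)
The plan is to exploit that invariance makes each $R_x$ self-adjoint while the fermionic identity forces $R_x^2=0$, and then to read off the admissible block structures from Lemma~\ref{mainlemma}. By invariance of $\langle\cdot,\cdot\rangle$ the operator $\sigma=R_x$ satisfies $\langle\sigma(y),z\rangle=\langle y,\sigma(z)\rangle$, so $\sigma$ is self-adjoint; moreover $R_x^2=0$, so $\sigma$ is nilpotent of index at most $2$. I would apply Lemma~\ref{mainlemma} to this $\sigma$, obtaining a mutually orthogonal, non-degenerate, $\sigma$-invariant decomposition $V=\bigoplus_k V_k$, and then examine the two possible shapes of $\sigma\mid_{V_k}$.

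First I would eliminate the second (complex) type of block. Such a block has eigenvalues $a\pm bi$ with $b\neq 0$, hence nonzero eigenvalues, which is impossible for a nilpotent operator. Thus every $V_k$ is of the first (real Jordan) type. For such a block, nilpotency forces the eigenvalue $\lambda=0$, so $\sigma\mid_{V_k}$ is a single nilpotent Jordan block of size $r$, whose nilpotency index is exactly $r$. Since $\sigma^2=0$, this gives $r\le 2$, so every block has dimension $1$ or $2$.

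Next I would compute the contribution of each block to the image and to the signature. A block with $r=1$ has $\sigma\mid_{V_k}=0$ and contributes nothing to $\mathrm{Im}\,R_x$, while its Gram matrix is $(\pm1)$, contributing at most one negative direction. A block with $r=2$ has $\sigma(\alpha_1)=\alpha_2$ and $\sigma(\alpha_2)=0$, so it contributes exactly $1$ to $\dim\mathrm{Im}\,R_x$; on it the form satisfies $\langle\alpha_1,\alpha_1\rangle=\langle\alpha_2,\alpha_2\rangle=0$ and $\langle\alpha_1,\alpha_2\rangle=\pm1$, i.e.\ the Gram matrix is $\bigl(\begin{smallmatrix}0&\pm1\\\pm1&0\end{smallmatrix}\bigr)$, which has determinant $-1$ and hence signature $(1,1)$. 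Therefore each $r=2$ block carries exactly one negative direction.

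Finally, since the $V_k$ are mutually orthogonal and non-degenerate, the signatures add up to $(n-p,p)$. Letting $s$ be the number of $r=2$ blocks, the count of negative directions satisfies $p\ge s$, because those $s$ blocks already supply $s$ negative directions. As only the $r=2$ blocks contribute to the image, $\dim\mathrm{Im}\,R_x=s$, whence $\dim\mathrm{Im}\,R_x=s\le p$, as desired. The one step needing care is the signature count on the $r=2$ blocks: the whole argument hinges on the off-diagonal Gram matrix having signature exactly $(1,1)$, which is precisely what forces each unit of rank of $R_x$ to consume one unit of the negative index $p$.
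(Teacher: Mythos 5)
Your proof is correct, but it takes a genuinely different and considerably heavier route than the paper's. The paper's own argument is two lines: since $R_{x}^{2}=0$ and the form is invariant, $\langle R_{x}y,R_{x}z\rangle=\langle y,R_{x}^{2}z\rangle=0$, so $\mathrm{Im}\,R_{x}$ is a totally isotropic subspace of a non-degenerate symmetric form of index $p$, and any such subspace has dimension at most $\min(p,n-p)\le p$ (no basis or normal form is needed). You instead invoke the full structure theorem of Lemma~\ref{mainlemma} for the self-adjoint operator $R_{x}$, rule out the complex blocks by nilpotency, cut the real Jordan blocks down to size $r\le 2$ using $R_{x}^{2}=0$, and then count: each $r=2$ block contributes one dimension to the image and, because its Gram matrix is $\bigl(\begin{smallmatrix}0&\pm1\\ \pm1&0\end{smallmatrix}\bigr)$ of signature $(1,1)$, exactly one unit to the negative index, while $r=1$ blocks contribute nothing to the image. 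All of these steps check out, including the identification $\mathrm{Im}\,R_{x}=\bigoplus_{k}\mathrm{Im}(R_{x}\mid_{V_{k}})$ from the $\sigma$-invariance of the summands; your key observation that the hyperbolic-plane Gram matrix forces each unit of rank to consume one negative direction is precisely the geometric content hiding behind the paper's isotropy argument. What your approach buys is that it simultaneously derives the canonical form of $R_{x_{0}}$ that the paper needs in the very next step of the main proof (and it yields the companion bound $\dim\mathrm{Im}\,R_{x}\le n-p$ for free, by counting positive directions); what it costs is that for this lemma alone the totally-isotropic-subspace argument is much shorter.
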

\begin{proof}
Recall that $R_{x}^{2}=0$, it follows that $ImR_{x}\subseteq KerR_{x}$.
By the invariance of $\langle\cdot,\cdot\rangle$, we have $\langle R_{x}y, R_{x}z\rangle=\langle y, R_{x}^{2}z\rangle=0$ which yields $\langle ImR_{x},ImR_{x}\rangle=0$. Hence $\dim ImR_{x}\leq p$.
\end{proof}

Let $x_{0}\in A$ satisfy $\dim ImR_{x}\leq \dim ImR_{x_0}$ for any $x\in A$. By Lemma $\ref{lemma-iso}$, $\dim ImR_{x_0}\leq p$. For convenience, let $\dim ImR_{x_0}=k$.
By Lemma $\ref{mainlemma}$ and $R_{x_0}^{2}=0$, there exists a basis $\{e_1,\ldots,e_n\}$ of $A$ such that the operator $R_{x_0}$ relative to the basis has the matrix of form
$$\left(
    \begin{array}{ccc}
      \left(
        \begin{array}{ccc}
          \left(
            \begin{array}{cc}
              0 & 0 \\
              1 & 0 \\
            \end{array}
          \right)
           & 0 &  \\
           & \ddots &  \\
           & 0 & \left(
            \begin{array}{cc}
              0 & 0 \\
              1 & 0 \\
            \end{array}
          \right) \\
        \end{array}
      \right)_{2k\times 2k}
       & 0_{2k\times (n-2k)} \\
      0_{(n-2k)\times 2k} & 0_{(n-2k)\times (n-2k)} \\
    \end{array}
  \right),
$$
where the matrix of the metric $\langle\cdot,\cdot\rangle$ with respect to $\{e_1,\ldots,e_n\}$ is $$\left(
    \begin{array}{ccc}
      C_{2k} & 0 & 0 \\
      0 & -I_{p-k} & 0 \\
      0 & 0 & I_{n-p-k}
    \end{array}
  \right).$$
Here $C_{2k}={\rm diag}\left(\left(
           \begin{array}{cc}
              0 & 1 \\
              1 & 0
            \end{array}
          \right), \cdots,
          \left(
            \begin{array}{cc}
              0 & 1 \\
              1 & 0
            \end{array}
          \right)
              \right)$ and $I_{s}$ denotes the $s\times s$ identity matrix.
For any $x\in A$, the matrix of the operator $R_{x}$ relative to the basis is
$$\left(
    \begin{array}{ccc}
      A_1 & A_2 & A_3 \\
      A_4 & A_5 & A_6 \\
      A_7 & A_8 & A_9 \\
    \end{array}
  \right),
$$
whose blocks are the same as those of the metric matrix under the basis $\{e_1,\ldots,e_n\}$.

Firstly we can prove that $$\left(
              \begin{array}{cc}
                A_5 & A_6 \\
                A_8 & A_9 \\
              \end{array}
            \right)=0_{(n-2k)\times (n-2k)}.$$
In fact, assume that there exists some nonzero entry of $\left(
              \begin{array}{cc}
                A_5 & A_6 \\
                A_8 & A_9 \\
              \end{array}
            \right)$
which denoted by $d$. Consider the matrix form of the operator $R_{x}+lR_{x_0}$. With no confusions, we do not distinguish between the operator $R_x$ and  its matrix form in the following. For any $l\in \mathbb{R}$, by the choice of $x_0$, we know that $r(R_{x}+lR_{x_0})=r(R_{x+lx_0})\leq k$. Taking 2nd, $\cdots$, $2k$-th rows, 1st, $\cdots$, (2k-1)-th columns, and the row and column containing the element $d$ in the matrix of $R_{x}+lR_{x_0}$, we have the $(k+1)\times (k+1)$ matrix $\left(
          \begin{array}{cc}
            B+lI_{k} & \alpha \\
            \beta & d \\
          \end{array}
        \right)$. Note that the determinant of
$\left(
          \begin{array}{cc}
           B+lI_{k} & \alpha \\
            \beta & d \\
          \end{array}
        \right)$, i.e., $$\left|
          \begin{array}{cc}
           B+lI_{k} & \alpha \\
            \beta & d \\
          \end{array}
        \right|,$$
is a polynomial of degree $k$ in a single indeterminate $l$. So we can choose some $l'\in \mathbb{R}$ such that the determinant is nonzero. It follows that $$r(R_{x}+l'R_{x_0})=r(R_{x+l'x_0})\geq k+1,$$ which is a contradiction.

Secondly, by  $R_{x}R_{x_0}+R_{x_0}R_{x}=0$, we have that $A_1=(M_{ij})_{k\times k}$ where $M_{ij}=\left(
                                                                                                       \begin{array}{cc}
                                                                                                         b_{ij} & 0 \\
                                                                                                         d_{ij} & -b_{ij} \\
                                                                                                       \end{array}
                                                                                                     \right)$,
$$A_2=\left(
             \begin{array}{cccc}
               0 & \cdots & \cdots & 0 \\
               a_{2,1} & \cdots & \cdots & a_{2,p-k} \\
               \vdots & \vdots & \vdots & \vdots \\
               0 & \cdots & \cdots & 0  \\
               a_{2k,1} & \cdots & \cdots & a_{2k,p-k} \\
             \end{array}
           \right),$$ $$A_3=\left(
             \begin{array}{cccc}
               0 & \cdots & \cdots & 0 \\
               c_{2,1} & \cdots & \cdots & c_{2,n-p-k} \\
               \vdots & \vdots & \vdots & \vdots \\
               0 & \cdots & \cdots & 0  \\
               c_{2k,1} & \cdots & \cdots & c_{2k,n-p-k} \\
             \end{array}
           \right)
.$$
Furthermore, since $\langle R_{x}y, z\rangle=\langle y, R_{x}z\rangle$, we obtain that $$M_{ij}=\left(
                                                                                                       \begin{array}{cc}
                                                                                                         b_{ij} & 0 \\
                                                                                                         d_{ij} & -b_{ij} \\
                                                                                                       \end{array}
                                                                                                     \right), M_{ji}=\left(
                                                                                                       \begin{array}{cc}
                                                                                                         -b_{ij} & 0 \\
                                                                                                         d_{ij} & b_{ij} \\
                                                                                                       \end{array} \right),$$
where $b_{ii}=0$ for any $1\leq i\leq k$, and
$$A_4=-\left(
                                                            \begin{array}{ccccc}
                                                              a_{2,1} & 0 & \cdots & a_{2k,1} & 0 \\
                                                              \vdots & \vdots & \vdots & \vdots & \vdots \\
                                                              \vdots & \vdots & \vdots & \vdots & \vdots \\
                                                              a_{2,p-k} & 0 & \cdots & a_{2k,p-k} & 0 \\
                                                            \end{array}
                                                          \right), $$$$
A_7=\left(
                                                            \begin{array}{ccccc}
                                                              c_{2,1} & 0 & \cdots & c_{2k,1} & 0 \\
                                                              \vdots & \vdots & \vdots & \vdots & \vdots \\
                                                              \vdots & \vdots & \vdots & \vdots & \vdots \\
                                                              c_{2,n-p-k} & 0 & \cdots & c_{2k,n-p-k} & 0 \\
                                                            \end{array}
                                                           \right). $$
 Since $R_{x}^{2}=0$, we have that $A_{1}^{2}+A_{2}A_{4}+A_{3}A_{7}=0_{2k\times 2k}.$
Note that $$0=(A_{1}^{2}+A_{2}A_{4}+A_{3}A_{7})_{i,i}=(A_{1}^{2})_{i,i}.$$ It follows that $b_{ij}=0$ for any $i,j$. Then $$M_{ij}=M_{ji}=\left(
                                                                                                       \begin{array}{cc}
                                                                                                         0 & 0 \\
                                                                                                         d_{ij} & 0 \\
                                                                                                       \end{array}
                                                                                                     \right).$$

Finally, we claim that  $A_2,A_3,A_4$ and $A_7$ are zero matrices. Here we only prove $A_2=0_{2k\times (p-k)}$, similar for others. Assume that there exists some nonzero entry of $A_2$ which denoted by $d$. Consider the matrix of the operator $R_{x}+lR_{x_0}$. Similar to the proof of $$\left(
              \begin{array}{cc}
                A_5 & A_6 \\
                A_8 & A_9 \\
              \end{array}
            \right)=0_{(n-2k)\times (n-2k)},$$ we consider the matrix
 $$ \left(
          \begin{array}{cc}
            A_{1}^{'}+lI_{k} & \alpha^{T} \\
            -\alpha & 0 \\
          \end{array}
        \right),
 $$ where $d$ is an entry in the vector $\alpha$ and $A_{1}^{'}=(d_{ij})_{k\times k}$ is a symmetric matrix. Thus there exists an orthogonal matrix $P$ such that
  $P^{T}A_{1}^{'}P=\left(
                     \begin{array}{ccc}
                       \lambda_1 &  & 0 \\
                        & \ddots &  \\
                       0 &  & \lambda_k \\
                     \end{array}
                   \right).
  $ Choose some $l> \{|\lambda_1|,\cdots,|\lambda_k|\}$. Then the matrix $A_{1}^{'}+lI_{k}$ is invertible. We have
  $$\left|
          \begin{array}{cc}
            A_{1}^{'}+lI_{k} & \alpha^{T} \\
            -\alpha & 0 \\
          \end{array}
        \right|=\left|\left(
          \begin{array}{cc}
            P^{T} & 0 \\
            0 & 1 \\
          \end{array}
        \right)\left(
          \begin{array}{cc}
            A_{1}^{'}+lI_{k} & \alpha^{T} \\
            -\alpha & 0 \\
          \end{array}
        \right)\left(
          \begin{array}{cc}
            P & 0 \\
            0 & 1 \\
          \end{array}
        \right)\right|$$$$=\left|
                   \begin{array}{cc}
                     \left(
                     \begin{array}{ccc}
                       \lambda_1+l &  & 0 \\
                        & \ddots &  \\
                       0 &  & \lambda_k+l \\
                     \end{array}
                   \right) & \beta^{T} \\
                     -\beta & 0 \\
                   \end{array}
                 \right|=(\Pi_{i=1}^{k}(\lambda_i+l))\Sigma_{i=1}^{k}\frac{1}{\lambda_i+l}b_{i}^{2}\neq 0, $$
 where $\beta=\alpha P=(b_{1},\cdots,b_{k})$ is a nonzero vector. It follows that $$r(R_{x}+lR_{x_0})
=r(R_{x+lx_0})\geq k+1,$$ which is a contradiction. That is, $A_2=0_{2k\times (p-k)}$.

Up to now, we know that the matrix of $R_x$ is
$$\left(
    \begin{array}{cc}
      A_1 & 0_{2k\times (n-2k)} \\
      0_{(n-2k)\times 2k} & 0_{(n-2k)\times (n-2k)} \\
    \end{array}
  \right),
$$
where $A_1=(M_{ij})_{k\times k}$, here $M_{ij}=M_{ji}=\left(
\begin{array}{cc}
 0 & 0 \\
  d_{ij}(x) & 0 \\
   \end{array}
    \right).$ Hence $R_{x}R_{y}=0$ for any $x,y\in A$, which implies Theorem~\ref{maintheorem}.

\section{The structure of such fermionic Novikov algebras}
Let $A$ be an $n$-dimensional fermionic Novikov algebra with an invariant non-degenerate symmetric bilinear form $\langle\cdot,\cdot\rangle$ of type $(n-p,p)$.
By the above section, if $x_{0}\in A$ satisfies $$\dim ImR_{x}\leq \dim ImR_{x_0}=k\leq p$$ for any $x\in A$, then there exists a basis $\{e_1,\cdots,e_n\}$ such that the matrix of $R_x$ is
$$\left(
    \begin{array}{cc}
      A_1 & 0_{2k\times (n-2k)} \\
      0_{(n-2k)\times 2k} & 0_{(n-2k)\times (n-2k)} \\
    \end{array}
  \right),
$$
where $A_1=(M_{ij})_{k\times k}$, here $M_{ij}=M_{ji}=\left(
\begin{array}{cc}
 0 & 0 \\
  d_{ij}(x) & 0 \\
   \end{array}
    \right)$. In particular, $d_{ii}(x_0)=1$ for $i=1,\cdots,k$ and others zero. Clearly
\begin{proposition}
$\dim AA=\dim ImR_{x_0}=k.$
\end{proposition}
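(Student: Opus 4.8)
The plan is to identify $AA$ with the sum of the images of the right multiplication operators and then read off its dimension directly from the block-diagonal normal form for $R_x$ established in Section~2. First I would observe that since $xy=R_y(x)$, the product space satisfies $AA=\sum_{y\in A}ImR_y$. It therefore suffices to control each $ImR_y$ and to show that all of them lie inside one fixed $k$-dimensional subspace which $ImR_{x_0}$ already exhausts.

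Next I would read the image of a general $R_x$ off its matrix $\left(\begin{smallmatrix} A_1 & 0 \\ 0 & 0\end{smallmatrix}\right)$ with $A_1=(M_{ij})_{k\times k}$ and $M_{ij}=\left(\begin{smallmatrix} 0 & 0 \\ d_{ij}(x) & 0\end{smallmatrix}\right)$. Applying $R_x$ to the basis, the only nonzero entry of each block $M_{ij}$ sits in its lower-left corner, so $R_x$ annihilates every $e_{2i}$ and the whole tail $e_{2k+1},\ldots,e_n$, while sending $e_{2j-1}$ to $\sum_i d_{ij}(x)\,e_{2i}$. Hence $ImR_x\subseteq W:=\mathrm{span}\{e_2,e_4,\ldots,e_{2k}\}$ for \emph{every} $x\in A$, which gives $AA\subseteq W$ and in particular $\dim AA\leq k$.

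Finally I would use the normalization $d_{ii}(x_0)=1$ (and all other $d_{ij}(x_0)=0$) to compute $R_{x_0}(e_{2j-1})=e_{2j}$ for $j=1,\ldots,k$, so that $ImR_{x_0}=W$ and $\dim ImR_{x_0}=k$. Combining the inclusions $ImR_{x_0}\subseteq AA\subseteq W=ImR_{x_0}$ forces equality throughout, yielding $AA=ImR_{x_0}$ and $\dim AA=\dim ImR_{x_0}=k$.

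I do not expect a serious obstacle: the statement is essentially a bookkeeping consequence of the normal form already derived, which is why the authors write ``Clearly.'' The only point requiring genuine care is the inclusion $ImR_x\subseteq W$ for \emph{all} $x$, not merely for $x_0$ --- one must verify that no image ever reaches the ``upper'' slots $e_{2i-1}$ or the tail $e_{2k+1},\ldots,e_n$. This is guaranteed uniformly by the common shape of the blocks $M_{ij}$, whose sole nonzero entry always occupies the lower-left position, so the argument closes without further computation.
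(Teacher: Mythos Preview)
Your proposal is correct and is exactly the argument the paper has in mind when it writes ``Clearly'': the normal form for $R_x$ forces every image into $W=\mathrm{span}\{e_2,e_4,\ldots,e_{2k}\}$, while the normalization of $R_{x_0}$ shows that $ImR_{x_0}$ already fills $W$. There is no alternative route here to compare against, since the paper offers no further details.
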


If $k=0$, then $xy=0$ for any $x,y\in A$.

If $k=1$, then there exists a basis $\{e_1,\cdots,e_n\}$ such that the matrix of $R_x$ is
$$\left(
    \begin{array}{cc}
      M & 0_{2\times (n-2)} \\
      0_{(n-2)\times 2} & 0_{(n-2)\times (n-2)} \\
    \end{array}
  \right),
$$
where $M=\left(
\begin{array}{cc}
 0 & 0 \\
  d(x) & 0 \\
   \end{array}
    \right)$.
Clearly the matrices of $L_{e_i}$ are zero matrices if $i\not=1$. Thus $$L_xL_y=L_yL_x,\quad \forall x,y\in A.$$
Together with $R_{x}R_{y}=0$ for any $x,y\in A$, the matrices of $R_{e_i}$ for $1\leq i\leq n$ determine a fermionic Novikov algebra. Furthermore
$A$ is one of the following cases:
\begin{enumerate}
   \item $k=1$, and there exists a basis $\{e_1,\cdots,e_n\}$ such that $e_1e_1=e_2$ and others zero.
   \item $k=1$, and there exists a basis $\{e_1,\cdots,e_n\}$ such that $e_1e_2=e_2$ and others zero.
   \item $k=1$, and there exists a basis $\{e_1,\cdots,e_n\}$ such that $e_1e_3=e_2$ and others zero.
\end{enumerate}
In particular, the above discussion gives the classification of fermionic Novikov algebras admitting invariant Lorentzian symmetric bilinear forms which is obtained in \cite{MG}.

If $k=2$, then there exists a basis $\{e_1,\cdots,e_n\}$ such that nonzero products are given by
   $$e_1e_i=\lambda_ie_2+\mu_ie_4,\quad e_3e_i=\mu_ie_2+\gamma_ie_4.$$
For this case, $A$ is a fermionic Novikov algebra if and only if $L_{e_1}L_{e_3}=L_{e_3}L_{e_1}$. But the complete classification is very complicated. It is similar for $k\geq 3$.

\section{Acknowledgements}
This work was supported by NSF of China (No. 11301282) and Specialized Research Fund for the
Doctoral Program of Higher Education (No. 20130031120004).

\end{document}